\pdfoutput=1
\documentclass[11pt]{article}

\newif\ifblind
\blindfalse

\newif\ifdraft
\draftfalse

\usepackage{rpmacros}
\usepackage[dvipsnames,svgnames,x11names,hyperref]{xcolor}
\usepackage{stmaryrd}
\usepackage{amsthm}
\usepackage{xfrac}
\usepackage{mathpazo}
\usepackage{enumitem}
\usepackage{fullpage}
\usepackage{thmtools}
\usepackage[T1]{fontenc}
\usepackage[colorlinks]{hyperref}
\hypersetup{
  linkcolor=[rgb]{0,0,0.4},
  citecolor=[rgb]{0, 0.4, 0},
  urlcolor=[rgb]{0.6, 0, 0},
  pdftitle={Unconditional Separation between LL and CLL},
  pdfauthor={Shubham Bhardwaj}
}
\usepackage{lipsum}
\usepackage{calrsfs,bbm}
\usepackage{multirow}
\usepackage{setspace}

\usepackage{microtype}

\usepackage[linesnumbered,vlined,ruled]{algorithm2e}
\usepackage{mathtools}
\DontPrintSemicolon
\SetKwComment{Comment}{$\triangleright$\ }{}

\usepackage{amsthm}
\usepackage{thmtools,thm-restate}
\usepackage[nameinlink,capitalise]{cleveref}

\numberwithin{equation}{section}
\declaretheoremstyle[bodyfont=\it,qed=\qedsymbol]{noproofstyle}

\declaretheorem[name=Observation,numbered=no]{observation*}

\declaretheorem[numberlike=equation]{theorem}

\declaretheorem[name=Theorem,numbered=no]{theorem*}

\declaretheorem[numberlike=equation]{lemma}
\declaretheorem[name=Lemma,numbered=no]{lemma*}

\declaretheorem[name=Corollary,numbered=no]{corollary*}

\declaretheorem[name=Proposition,numbered=no]{proposition*}

\declaretheorem[name=Claim,numbered=no]{claim*}

\declaretheorem[name=Conjecture,numbered=no]{conjecture*}

\declaretheorem[name=Question,numbered=no]{question*}

\declaretheoremstyle[bodyfont=\it,qed=$\lozenge$]{defstyle} 

\declaretheorem[numberlike=equation,style=defstyle]{definition}
\declaretheorem[unnumbered,name=Definition,style=defstyle]{definition*}

\declaretheorem[unnumbered,name=Example,style=defstyle]{example*}

\declaretheorem[unnumbered,name=Notation=defstyle]{notation*}

\declaretheorem[unnumbered,name=Construction,style=defstyle]{construction*}

\declaretheorem[unnumbered,name=Remark,style=defstyle]{remark*}

\renewcommand{\phi}{\varphi}
\renewcommand{\epsilon}{\varepsilon}

\newcommand{\SP}{\Sigma\Pi}
\newcommand{\SPsize}[1]{(\SP)^k\operatorname{-size}}

\usepackage{nth}
\usepackage{intcalc}
\usepackage{etoolbox}
\usepackage{xstring}

\usepackage{ifpdf}
\ifpdf
\else
\usepackage[quadpoints=false]{hypdvips}
\fi

\newcommand{\ECCCurl}[2]{http://eccc.hpi-web.de/report/\ifnumcomp{#1}{>}{93}{19}{20}#1/#2/}

\newcommand{\shortECCC}[2]{\texttt{\href{http://eccc.hpi-web.de/report/\ifnumcomp{#1}{>}{93}{19}{20}#1/#2/}{eccc:TR#1-#2}}}

\newcommand{\parseECCC}[1]{
\StrSubstitute{#1}{TR}{}[\tmpstring]%
\IfSubStr{\tmpstring}{/}{ 
\StrBefore{\tmpstring}{/}[\ecccyear]%
\StrBehind{\tmpstring}{/}[\ecccreport]%
}{
\StrBefore{\tmpstring}{-}[\ecccyear]%
\StrBehind{\tmpstring}{-}[\ecccreport]%
}%
\shortECCC{\ecccyear}{\ecccreport}}

\newcounter{todo}

\ifdraft
\newcommand{\SBnote}[1]{\refstepcounter{todo}\textcolor{BlueGreen}{\guillemotleft Shubham: #1 \guillemotright\addcontentsline{tod}{subsection}{[SB]~#1}}}
\else
\newcommand{\SBnote}[1]{}
\fi

\newcommand{\ignore}[1]{}

\makeatletter

\newcommand\listtodoname{List of todos}
\newcommand\listoftodos{%
  \section*{\listtodoname}\@starttoc{tod}}
\makeatother

\allowdisplaybreaks
\title{Unconditional Separation between \texorpdfstring{$\mathsf{LL}$}{LL} and \texorpdfstring{$\mathsf{CLL}$}{CLL}}

\ifblind
\else
\author{
     {Shubham Bhardwaj{\thanks{Tata Institute of Fundamental Research, Mumbai, India. Email: \texttt{shubham.bhardwaj@tifr.res.in}.}}}
}
\fi
\date{17 October 2026}

\begin{document}

\maketitle

\begin{abstract}
Catalytic computation involves the study of Turing machines that are provided an additional catalytic tape (with arbitrary content) that the machine may use while ensuring that the tape contents are restored at the end of the computation. Although it is known that catalytic log-space machines are at least as powerful as non-deterministic log-space (and more), we do not have unconditional asymptotic separations. 

In this work, we prove an unconditional separation between a deterministic space complexity class and its catalytic counterpart. More precisely, at the doubly logarithmic scale, we show that catalytic Turing machines can provably compute more languages than standard deterministic doubly-logarithmic space machines. The separation is witnessed by the unary language $\mathrm{POW2} = \{1^n : n \text{ is a power of two}\}$. We construct a catalytic Turing machine that uses $O(\log \log n)$ bits of space and $O(\log n)$ bits of catalytic tape to decide $\mathrm{POW2}$, whereas any standard deterministic Turing machine requires $\Omega(\log n)$ space to decide it.
\end{abstract}

\ifdraft
{\footnotesize 
\listoftodos
}
\fi

\section{Introduction}
\label{sec:introduction}

Catalytic computation is a model of computation in which a Turing machine is
provided with an additional \emph{catalytic tape} that is initially in an
arbitrary configuration. The machine may read from and write to the catalytic
tape during its computation, but it must restore the tape to its original
configuration before halting. This model was introduced by Buhrman, Cleve,
Kouck\'y, Loff, and Speelman~\cite{BuhrmanEtAl14}. We write $\mathsf{CL}$ for
the class of languages decided by catalytic machines with $O(\log n)$ ordinary
work space and polynomial catalytic space, and $\mathsf{CLP}$ for its
polynomial-time restriction.

Buhrman et al.~\cite{BuhrmanEtAl14} also gave the surprising result that evaluating a logarithmic-depth threshold circuit can be done in catalytic log-space, that is 
$\mathsf{TC}^1$ is contained in $\mathsf{CL}$. Since $\mathsf{TC}^1$ contains
both $\mathsf{NL}$ and $\mathsf{BPL}$, this gives
$\mathsf{NL}\subseteq\mathsf{CL}$ and $\mathsf{BPL}\subseteq\mathsf{CL}$, or alternatively, that directed $s$--$t$ reachability and estimation of random walks (complete problems for $\mathsf{NL}$ and $\mathsf{BPL}$, respectively) can be solved in catalytic log-space.

In particular, the conjecture $\mathsf{L}\neq\mathsf{NL}$ would imply the
strict containment $\mathsf{L}\subsetneq\mathsf{CL}$. This gave a strong indication that catalytic space may strictly provide more power than ordinary space bounded computation. Thus, they
gave strong, but conditional, evidence that catalytic log-space is more
powerful than standard log-space. On the other hand, they also showed that $\mathsf{CL}\subseteq\mathsf{ZPP}$. 

Recent work has also made progress to understand the connection between catalytic, non-deterministic, and randomized computation. Cook, Li, Mertz, and Pyne~\cite{CookLiMertzPyne25} proved the
unconditional derandomization $\mathsf{CBPL}=\mathsf{CL}$, where
$\mathsf{CBPL}$ is the bounded-error randomized analogue of $\mathsf{CL}$.
On the upper bound side, they also showed that $\mathsf{CL}\subseteq\mathsf{LOSSY}$, improving the previous upper bound $\mathsf{CL}\subseteq\mathsf{ZPP}$.
Kouck\'y, Mertz, Pyne, and Sami~\cite{KouckyMertzPyneSami25} subsequently
showed that nondeterminism also adds no power to catalytic log-space, proving
$\mathsf{CNL}=\mathsf{CL}$. Consequently, the deterministic,
nondeterministic, and bounded-error randomized versions of catalytic log-space
collapse:
\[
    \mathsf{CNL}=\mathsf{CBPL}=\mathsf{CL}.
\]
These results show that catalytic space is the ultimate resource in the log-space regime, subsuming both nondeterminism and bounded-error randomness, and adding nondeterminism or bounded-error randomness to catalytic log-space does not yield any additional computational power.

There has also been substantial progress in placing natural problems in
catalytic log-space. Agarwala and Mertz~\cite{AgarwalaMertz25} showed that
bipartite maximum matching can be solved in $\mathsf{CL}$. More recently,
general maximum matching was placed in $\mathsf{CL}$ by Chakraborty, Datta,
Kusre, Mukhopadhyay, and Sinhababu~\cite{ChakrabortyEtAl26}. Agarwala,
Alekseev, and Vinciguerra~\cite{AgarwalaAlekseevVinciguerra26} proved the more
general result that linear matroid intersection is in $\mathsf{CL}$. These
results show that $\mathsf{CL}$ contains central combinatorial
problems that are not known to be in $\mathsf{L}$, giving further evidence that catalytic log-space is more powerful than standard log-space.

Insights and techniques from catalytic computation also influenced the design of ordinary
space-efficient algorithms. Cook and Mertz~\cite{CookMertz20,CookMertz24} crucially used the ideas
from catalytic computation in their work on the Tree Evaluation problem, applying ideas from
catalytic space even outside the catalytic model. This line of work culminated
in their $O(\log n\log\log n)$-space algorithm for Tree
Evaluation~\cite{CookMertz24}, bringing this central candidate for separating
$\mathsf{L}$ from $\mathsf{P}$ close to logarithmic space.

The Cook--Mertz Tree Evaluation procedure was subsequently used by 
Williams~\cite{Williams25} to obtain a breakthrough time--space simulation.
By reducing the computation of a time-bounded multitape Turing machine to a
sequence of implicitly defined Tree Evaluation instances, Williams proved that,
for every $t(n)\geq n$,
\[
    \mathsf{TIME}(t(n))
    \subseteq
    \mathsf{SPACE}\!\left(O\!\left(\sqrt{t(n)\log t(n)}\right)\right).
\]
This substantially improves the previously known general simulation of time
$t$ in space $O(t/\log t)$.

Despite these advances, unconditional separations between
catalytic and ordinary space bounded classes are not known at any level. The containments of $\mathsf{NL}$ and
$\mathsf{BPL}$ in $\mathsf{CL}$ give a separation only under unproved
standard complexity assumptions, while the matching and matroid-intersection
results are not accompanied by log-space lower bounds. In this work, we give an unconditional separation
between a deterministic space class and its catalytic counterpart. We obtain
this separation at the doubly logarithmic scale, witnessed by the unary
language $\mathrm{POW2}$.

\begin{theorem}[Main theorem]
\label{thm:main}
There is a deterministic catalytic Turing machine that decides $\mathrm{POW2}$ using
$O(\log\log n)$ ordinary work space and $O(\log n)$ catalytic space. On the other
hand, no standard deterministic Turing machine decides $\mathrm{POW2}$ using
$o(\log n)$ work space. Consequently,
\[
    \mathrm{POW2}\in\mathsf{CLL}\setminus\mathsf{LL}
    \qquad\text{and}\qquad
    \mathsf{LL}\subsetneq\mathsf{CLL}.
\]
\end{theorem}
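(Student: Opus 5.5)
The proof splits into an upper bound and a lower bound.

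\textbf{Upper bound.} The idea is to use $O(\log n)$ catalytic cells as a binary counter that is incremented once per input symbol. Write $m=\lceil\log_2 n\rceil+2$. The machine first determines a block of roughly $m$ catalytic cells without knowing $n$ in advance. It does this by walking the input head and, in parallel, moving the catalytic head one cell per doubling. Concretely, it reserves the first $O(\log n)$ catalytic cells as a register $R$ holding an arbitrary initial value $c$, and the length $m$ of $R$ must be known only up to a constant factor. To obtain it, the machine runs a doubling search over $k=1,2,\dots$ and stores $k$ in ordinary space, which takes $O(\log\log n)$ bits since $k\le O(\log n)$. For the current $k$, it treats the first $k$ catalytic cells as a $k$-bit counter and sweeps the input head across $1^n$. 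At each input symbol it adds $1$ modulo $2^k$ to the counter, using the standard ripple-carry increment. The carry position is tracked by the catalytic head, and the head returns by walking back to the tape start, which is marked by the end of the counter region being at a known offset $k$; to walk back $k$ cells the machine counts down with $O(\log k)$ ordinary bits.

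After one sweep the counter holds $c+n \bmod 2^k$. The machine then needs $n \bmod 2^k$, which it cannot read directly because $c$ is unknown. The standard catalytic trick handles this. The machine inspects the counter's bits while comparing against a second sweep: it first copies nothing, instead decrementing $n$ times back to $c$, and it tests the relevant bit in between. More cleanly, it tests each bit $j<k$ of $n$ separately. Bit $j$ is the XOR of bit $j$ of $c$ with bit $j$ of $(c+n)$, corrected for carries, and this can be extracted by using only the low $j+1$ bits as a counter; the index $j$ is stored in $O(\log\log n)$ bits. Concretely, the machine records bit $j$ of $c$ in ordinary space, runs the increment sweep restricted to the low $j+1$ bits, and reads bit $j$. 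It then runs a decrement sweep to restore the counter. Because it also knows the bits $<j$ of $n$ from previous stages (it only needs to know whether they are all zero, which takes one bit), it can compute bit $j$ of $n$.

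With these bit tests the machine determines whether $n\equiv 0 \pmod{2^j}$ for increasing $j$. It finds the largest $j$ with $2^j\mid n$. It then checks whether $n=2^j$ by testing whether an increment sweep on a $(j+2)$-bit counter yields exactly $c+2^j$, i.e.\ whether $n \bmod 2^{j+2}=2^j$, and whether $n<2^{j+2}$. The second condition is checked by testing that an overflow of the $(j+2)$-bit counter happens at most once in the appropriate way. Alternatively, it verifies that a $(j+1)$-bit counter started from $c$ overflows exactly once in total, which a one-bit overflow flag in ordinary space detects, since an overflow occurs when the top carry propagates out. Every sweep is followed by its inverse sweep, so the catalytic tape is restored. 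Since $j\le\log n$, the ordinary space is $O(\log\log n)$ and the catalytic space is $O(\log n)$.

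\textbf{Lower bound.} For unary languages, the classical Alt--Mehlhorn / Stearns--Hartmanis--Lewis argument applies. If a machine uses $s(n)=o(\log n)$ space, it has $2^{O(s)}=n^{o(1)}$ configurations. A crossing-sequence and pumping argument on the unary input then shows that its acceptance behaviour on $1^n$ is eventually periodic, with period $p\le 2^{O(s)}$ that divides $L!$ for a small $L$. Concretely, one shows that $1^n$ and $1^{n+p\cdot t}$ are treated identically for large $n$. The powers of two are not eventually periodic, since no sufficiently large power of two $2^a$ has $2^a+p$ also a power of two. This gives a contradiction. I would invoke the known result that unary languages in $\mathsf{DSPACE}(o(\log n))$ are regular, and observe that $\mathrm{POW2}$ is not regular.

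\textbf{Main obstacle.} The main obstacle is the upper bound's overflow and length bookkeeping: detecting $n<2^{j+1}$ and locating the counter's extent with only $O(\log\log n)$ ordinary bits, all while guaranteeing exact restoration of the catalytic tape. My first attempt is to count overflows with a one-bit flag during a $(j+1)$-bit sweep.
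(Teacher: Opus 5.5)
Your search for $\nu_2(n)$ (record bit $j$ of $c$, run an increment sweep on cells $0,\dots,j$, reread bit $j$, undo with a decrement sweep) is exactly the paper's first phase. It needs no preliminary doubling search for the counter length, which as you sketch it has no stopping rule anyway: the loop halts by itself at $j=\nu_2(n)\le\log n$.

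The genuine gap in the upper bound is the final test of whether $n=2^j$. Write $c'=c\bmod 2^{j+1}$ and let $c_j$ be bit $j$ of $c$. The number of carries discarded out of cell $j$ during the sweep is $\lfloor (c'+n)/2^{j+1}\rfloor$.
\begin{itemize}
\item For $n=2^j$ this count equals $c_j$, so it is $0$ whenever $c_j=0$.
\item For $n=3\cdot 2^j$ and $c_j=0$ it equals $1$.
\end{itemize}
So your rule \emph{the $(j+1)$-bit counter overflows exactly once} rejects $2^j$ and accepts $3\cdot 2^j$ on such catalytic contents. Its verdict therefore depends on the catalytic tape, which the model forbids, and a one-bit flag cannot separate one overflow from two anyway. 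Your $(j+2)$-bit variant fails in the same way: $n=5\cdot 2^j$ satisfies $n\bmod 2^{j+2}=2^j$, and it overflows exactly once when $c\bmod 2^{j+2}<3\cdot 2^j$.

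The missing idea is to compare the overflow count with its value for $n=2^j$, namely $c_j$. Count discarded carries out of cell $j$ up to $2$, and accept iff the count equals $c_j$. This is correct because if $n=m2^j$ with $m\ge 3$ odd, then $n\ge 2^j+2^{j+1}$, so the count is at least $\lfloor (c'+2^j)/2^{j+1}\rfloor+1=c_j+1$.

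For the lower bound, the result you invoke is false: unary languages in $\mathsf{DSPACE}(o(\log n))$ need not be regular. Alt and Mehlhorn gave a nonregular unary language in $\mathsf{DSPACE}(\log\log n)$: the strings $1^n$ for which the least non-divisor of $n$ is a power of two. The space threshold below which only regular languages can be decided is $o(\log\log n)$, not $o(\log n)$.

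For the same reason, your pumping sketch cannot yield a single global period. The period $q_n$ it produces depends on the number of configurations used on $1^n$. So you only learn that, for large $n$, $1^{n+tq_n}$ is treated like $1^n$ for all $t\ge 0$. That is precisely the Hartmanis--Berman theorem the paper uses: every infinite unary language decided in $o(\log n)$ space contains an infinite regular sublanguage $R$.

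You then still have to rule out such an $R\subseteq\mathrm{POW2}$. Your remark about $2^a+p$ does this, provided $p$ is the fixed period of $R$ and $2^a>p$ is a member of $R$ beyond its preperiod.
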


The theorem follows from \cref{thm:pow2-in-cll} and \cref{thm:pow2-not-in-ll}, which are proved in \cref{sec:catalytic-algorithm} and \cref{sec:lower-bound}, respectively. The upper bound is obtained by constructing a catalytic Turing machine for $\mathrm{POW2}$ that uses $O(\log\log n)$ ordinary work space and $O(\log n)$ catalytic space. The lower bound is obtained by applying a theorem of Hartmanis and Berman~\cite{HartmanisBerman76} that shows that every infinite unary language that can be decided by a deterministic Turing machine using $o(\log n)$ work space contains an infinite regular sublanguage. We then show that $\mathrm{POW2}$ contains no infinite regular sublanguage, which implies the lower bound.

\section{Preliminaries}
\label{sec:preliminaries}

Throughout the paper, all logarithms are to base $2$. We write
\[
    \mathrm{POW2}=\{1^n : n=2^k\text{ for some }k\geq 0\}.
\]

We write $\mathsf{REG}$ for the class of regular languages. Recall that a unary
language $R\subseteq\{1\}^*$ is regular if and only if its set of accepted
lengths is eventually periodic. Equivalently, there exist integers $N\geq 0$
and $p\geq 1$ such that, for every $m\geq N$,
\[
    1^m\in R
    \quad\Longleftrightarrow\quad
    1^{m+p}\in R.
\]

We use the standard off-line Turing machine model. A deterministic Turing machine has
a two-way read-only input tape and a two-way read-write work tape. On input
$x\in\{0,1\}^n$, the input is placed between two endmarkers, which the input head cannot
cross. The work tape is initially blank.
The work space of a computation is the number of distinct work-tape cells visited during
the computation.  All machines in this paper halt on every
input.

\begin{definition}For a function $s\colon\mathbb{N}\to\mathbb{N}$, let
$\mathsf{DSPACE}(s)$ denote the class of languages decided by deterministic Turing
machines using $O(s(n))$ work space on inputs of length $n$. At the doubly logarithmic
scale, we define
\[
    \mathsf{LL}=\mathsf{DSPACE}(\log\log n).
\]

\end{definition}
\begin{definition}[{Buhrman et al.~\cite{BuhrmanEtAl14}}]
A deterministic catalytic Turing machine has, in addition to the input and ordinary
work tapes, a two-way read-write \emph{catalytic tape} that consists of $c(n)$ cells and is initially in an arbitrary configuration
\[
    \tau\in\{0,1\}^{c(n)}.
\]
The ordinary work tape remains initially blank and unmarked, exactly as in the standard
model above. Likewise, the catalytic tape is unmarked, so the machine is not told where it ends. A catalytic machine must satisfy the following two conditions for every
input $x$ and every initial catalytic configuration $\tau$:
\begin{enumerate}
    \item its accept or reject decision depends only on $x$ and not on $\tau$; and
    \item when the machine halts, the catalytic tape again contains exactly $\tau$.
\end{enumerate}
Thus, the machine may read and modify the catalytic tape during its computation, but it
must restore all of its initial contents before halting.

We say that such a machine uses ordinary work space $s(n)$ and catalytic space $c(n)$
if it visits at most $O(s(n))$ cells of the ordinary work tape and at most $O(c(n))$ cells of
the catalytic tape. We denote the corresponding class of languages by
$\mathsf{CSPACE}(s,c)$.

The catalytic analogue of $\mathsf{LL}$ considered in this paper is
\[
    \mathsf{CLL}
    =\mathsf{CSPACE}(\log\log n,\log n).
\]
\end{definition}

\begin{remark*}
It only makes sense to consider catalytic tapes of length at most $2^{O(s(n))}$, i.e.,
$c(n)\leq 2^{O(s(n))}$, since specifying an arbitrary position on a catalytic tape of length $c(n)$
requires $\log c(n)$ bits, which fits within the ordinary work space precisely
in this regime.
\end{remark*}

Notice that $\mathsf{LL}\subseteq\mathsf{CLL}$, since a catalytic machine may simply
ignore its catalytic tape.

\section{A Catalytic Algorithm for \texorpdfstring{$\mathrm{POW2}$}{POW2}}
\label{sec:catalytic-algorithm}
We first show that $\mathrm{POW2}\in\mathsf{CLL}$ by constructing a catalytic Turing machine that decides $\mathrm{POW2}$ using $O(\log \log n)$ ordinary work space and $O(\log n)$ catalytic space. The machine uses the catalytic tape as a counter to store an arbitrary value $L$ and performs arithmetic modulo $2^{b+1}$ for various $b\leq \log n$. The machine first computes the largest power of two dividing $n$ and then checks whether $n$ is equal to that power of two. 

\begin{theorem}\label{thm:pow2-in-cll}
There is a deterministic catalytic Turing machine that decides
\[
    \mathrm{POW2} = \{1^n : n \text{ is a power of two}\}
\]
using $O(\log \log n)$ ordinary work space and $O(\log n)$ catalytic space. Moreover,
for every input and every initial content of the catalytic tape, the machine restores every
catalytic cell that it visits before halting.
\end{theorem}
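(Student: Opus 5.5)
The plan is to treat a prefix of the catalytic tape as a binary counter whose initial value $L$ is unknown. The algorithm only ever \emph{adds} a quantity to this counter and later \emph{subtracts} the same quantity, so every visited cell ends up restored. Ordinary work space holds only a few indices of size $O(\log\log n)$: the current bit-width $b$, a pointer into the first $b+1$ catalytic cells, and $O(1)$ flag bits. Since we will have $b\le \log n+1$, each such index fits in $O(\log\log n)$ bits.

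The basic primitive is $\textsc{Add}_b(\pm n)$. It sweeps the input head across $1^n$, and for every input symbol it increments (or decrements) the counter formed by catalytic cells $0,\dots,b$ modulo $2^{b+1}$. One increment walks a pointer from cell $0$, flipping $1$'s to $0$'s until it flips a $0$ to a $1$ or passes cell $b$, and then walks the pointer back to cell $0$. Only a pointer of $O(\log b)$ bits is needed, and the machine may optionally record a flag when a carry leaves cell $b$. $\textsc{Add}_b(+n)$ followed by $\textsc{Add}_b(-n)$ is the identity on the tape.

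\emph{Step 1: divisibility test.} We decide whether $2^{b+1}\mid n$ without knowing $L$. For each $i\le b$, remember bit $i$ of the counter in one work bit, run $\textsc{Add}_b(+n)$, and compare bit $i$ with the remembered bit. Then run $\textsc{Add}_b(-n)$ to restore the counter, and store one bit recording whether a mismatch has been seen. Since $L+n\equiv L \pmod{2^{b+1}}$ iff $2^{b+1}\mid n$, this decides divisibility. Iterating $b=0,1,2,\dots$, the machine finds the least $b$ with $2^{b+1}\nmid n$, so $n=m\,2^b$ with $m$ odd. This happens by $b\le\log n$, so at most $\log n+1$ catalytic cells are touched. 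The loop can end before the input is read only if $n=0$, which is handled separately.

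\emph{Step 2: test $m=1$.} This is the step I expect to be the real obstacle. The difficulty is that $m=1$ and $m=3$ both give $n\equiv 2^b \pmod{2^{b+1}}$, and comparing $n$ with $2^{b+1}$ directly seems to require knowing $L$. My fix is to count carries out of cell $b$. Run $\textsc{Add}_b(+n)$ twice, keeping a saturating two-bit work counter of the overflows past cell $b$. Writing $\ell=L\bmod 2^{b+1}\in[0,2^{b+1})$, the number of overflows is
\[
\left\lfloor \frac{\ell+2n}{2^{b+1}}\right\rfloor=\left\lfloor \frac{\ell}{2^{b+1}}+m\right\rfloor = m .
\]
It is therefore exactly $1$ if $m=1$ and at least $3$ if $m\ge 3$, independently of $L$. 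Afterwards run $\textsc{Add}_b(-n)$ twice to restore the tape, and accept iff the count is $1$. This gives correctness: the answer depends only on $n$, and $n=m2^b$ with $m$ odd is a power of two iff $m=1$.

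Finally, I will verify three things. Every catalytic modification is undone by a matching inverse sweep. The pointer always returns to cell $0$, so the machine never needs to know where the tape ends. The total space is $O(\log\log n)$ ordinary bits and at most $\log n+2$ catalytic cells.
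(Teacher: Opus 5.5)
Your proposal is correct and follows essentially the same route as the paper. The shared ingredients are a catalytic counter taken modulo $2^{b+1}$, with every add-sweep undone by its inverse sweep, a first phase that locates $b=\nu_2(n)$ by detecting a change in the low bits, and a second phase that counts carries discarded at cell $b$. The only differences are minor: you test all bits $0,\dots,b$ where the paper checks only bit $b$, and you add $2n$ so that the overflow count is exactly $m$, independent of $L$, whereas the paper adds $n$ once and compares the count with the initial bit $v$. You also flag that $n=0$ needs separate treatment, a case the paper omits and on which its Phase~1 would never terminate.
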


\begin{proof}
We construct a deterministic catalytic Turing machine $M$ that decides $\mathrm{POW2}$ using $O(\log n)$ catalytic cells and $O(\log\log n)$ ordinary work space.

We think of the catalytic tape as a binary counter holding some arbitrary value $L$ and let
$\mathsf{enc}(L)$ be the encoding of this value that the catalytic tape holds. Let $\mathrm{Inc}(b)$ be the
routine that sweeps the input tape and increments the first $b+1$ bits of the catalytic counter by $1$ for each input
symbol it encounters, and similarly, let $\mathrm{Dec}(b)$ be the routine that sweeps the input tape and
decrements the first $b+1$ bits of the catalytic counter by $1$ for each input symbol it encounters. Clearly,
$\mathrm{Inc}(b)$ and $\mathrm{Dec}(b)$ are inverses of each other, and a single sweep of
$\mathrm{Inc}(b)$ followed by a single sweep of $\mathrm{Dec}(b)$ restores the catalytic tape to its
original content.

The machine has 2 phases. In the first phase, it finds the largest $v$ such that $2^v$ divides $n$. We index the catalytic bits starting from $0$, with bit $0$ being the least significant bit. To find $v$, the machine starts with $b=0$ and keeps increasing $b$ until the $b$-th bit detects a change. In each iteration, the machine stores the current value of the $b$-th bit of the catalytic tape. It then performs $\mathrm{Inc}(b)$ and checks the value of the $b$-th bit again. It then performs $\mathrm{Dec}(b)$ to restore the catalytic tape. If the two values are different, then it sets $v=b$ and stops the first phase; otherwise it increases $b$ by $1$ and continues.

\medskip

\SetKwComment{Comment}{$\triangleright$\ }{}
\begin{algorithm}[H]
\caption{Phase 1: finding the largest power of two dividing $n$}
\label{alg:phase-one}
\KwData{Input $1^n$, catalytic counter initially encoding an arbitrary value $L$}
\KwResult{A work-tape value $v=\nu_2(n)$; the catalytic tape is restored}
$b\gets 0$\;
\While{true}{
  $x\gets$ the $b$-th bit of the catalytic tape\;
  Perform $\mathrm{Inc}(b)$\Comment*[r]{adds $n$ to the counter modulo $2^{b+1}$}
  $y\gets$ the $b$-th bit of the catalytic tape\;
  Perform $\mathrm{Dec}(b)$\Comment*[r]{the counter is restored to $L$}
  \If{$x\neq y$}{
    $v\gets b$\;
    \Return $v$\;
  }
  $b\gets b+1$\;
}
\end{algorithm}

\medskip

After the first phase, the machine has computed $v=\nu_2(n)$, the largest power of two dividing
$n$. We now know that $n= m \cdot 2^v$ for some odd $m$. In the second phase, it checks whether $n=2^v$. To do this, we do $\mathrm{Inc}(v)$ and count the number of carries discarded at cell $v$ during the increment. If $n=2^v$ and the $v$-th bit of the catalytic tape is $0$, then there will be no carry discarded at cell $v$. If $n=2^v$ and the $v$-th bit of the catalytic tape is $1$, then there will be exactly one carry discarded at cell $v$. But if $m>1$, then the number of carries discarded at cell $v$ is larger than the $v$-th bit of the catalytic tape, since $n\geq 2^v+2^{v+1}$ and adding $n$ therefore causes at least one more carry than adding $2^v$. Thus, the machine accepts if and only if the number of carries discarded at cell $v$ is equal to the value of the $v$-th bit of the catalytic tape. The machine restores the catalytic tape by performing $\mathrm{Dec}(v)$ after counting the number of carries.

\medskip

\begin{algorithm}[H]
\caption{Phase 2: checking whether $n=2^v$}
\label{alg:phase-two}
\KwData{Input $1^n$, work-tape value $v=\nu_2(n)$, catalytic counter encoding $L$}
\KwResult{Accept iff $n=2^v$; the catalytic tape is restored}
$x\gets$ the $v$-th bit of the catalytic tape\;
Perform $\mathrm{Inc}(v)$\Comment*[r]{adds $n$ to the counter modulo $2^{v+1}$}
$y\gets$ the number of carries discarded at cell $v$ during $\mathrm{Inc}(v)$, counted up to $2$\;
Perform $\mathrm{Dec}(v)$\Comment*[r]{the counter is restored to $L$}
\If{$x\neq y$}{
  \Return reject\;
}
\Return accept\;
\end{algorithm}

\medskip

We now prove correctness. We first analyze Phase 1. The basic observation is the following.
For any $r$, the first $r$ bits of the catalytic counter encode the value of the full
counter modulo $2^r$. Thus, after one execution of $\mathrm{Inc}(r-1)$, these first $r$ bits change
from
\[
    L \bmod 2^r
    \quad\text{to}\quad
    (L+n) \bmod 2^r.
\]
In particular, if $2^r$ divides $n$, then the first $r$ bits do not change.

Now consider the iteration of Phase 1 with index $b$. If $b<\nu_2(n)$, then
$2^{b+1}$ divides $n$. By the observation with $r=b+1$, bits $0,1,\ldots,b$ are unchanged
after $\mathrm{Inc}(b)$, and hence bit $b$ is unchanged. Therefore Phase 1 cannot stop before
$b=\nu_2(n)$. On the other hand, when $b=\nu_2(n)$, we have
\[
    n\equiv 2^b \pmod {2^{b+1}}.
\]
Adding $n$ therefore flips bit $b$. Thus, we terminate with $b=\nu_2(n)$.

We next analyze Phase 2. The correctness follows from the observation that the number of carries discarded at cell $v$ during $\mathrm{Inc}(v)$ will be $x$, where $x$ is the value of the $v$-th bit of the catalytic tape, if and only if $n=2^v$. 

Finally, restoration is built into the two phases. Every execution of $\mathrm{Inc}(b)$ is followed
by the matching execution of $\mathrm{Dec}(b)$ before the machine uses the result of that
iteration to halt or continue. Hence every visited catalytic cell is restored for every initial
catalytic content.

We now show that the routines $\mathrm{Inc}(b)$ and $\mathrm{Dec}(b)$ can be
implemented within $O(\log\log n)$ space. Between routine calls, the catalytic
head is kept on cell $0$.

\begin{description}
\item[$\mathrm{Inc}(b)$ and $\mathrm{Dec}(b)$.] To perform $\mathrm{Inc}(b)$,
the machine sweeps its input head once across the input. Each time it reads a $1$, it keeps the input head in place and adds
$1$ to the catalytic counter by a carry forward increment: starting at cell
$0$, it replaces each $1$ by a $0$ and moves right, until it reads a $0$, which
it replaces by a $1$. If it replaces the $1$ in cell $b$ by a $0$, the carry out
of cell $b$ is discarded; this is exactly the reduction modulo $2^{b+1}$. In
Phase~2, the machine also counts the discarded carries, up to $2$. The machine
recognises cell $b$ by counting the moves of its catalytic head on a binary
counter on the work tape; the catalytic head then returns to cell $0$, and the
input sweep continues. The routine $\mathrm{Dec}(b)$ is identical with the
roles of $0$ and $1$ exchanged, so that a borrow propagates through the $0$'s
and is discarded at cell $b$. The binary counter never exceeds $b$, so both
routines use $O(\log b)$ work-tape cells and visit only cells $0,\dots,b$ of the
catalytic tape.

\item[Reading a catalytic bit.] To read bit $b$ (in Phase~1) or bit $v$ (in
Phase~2), the machine moves the catalytic head to that cell using a binary
counter in the same way, copies the bit to the work tape, and returns the
catalytic head to cell $0$. This uses $O(\log b)$ work-tape cells.
\end{description}

Altogether, the work tape holds the values $b$ and $v$, one binary counter for
the position of the catalytic head, and the values $x$ and $y$, each at most
$2$. Since $b\le v=\nu_2(n)\le\log n$, each of these values is stored in
$O(\log\log n)$ cells. Hence the ordinary work space is $O(\log\log n)$.
Moreover, the machine visits only cells $0,\dots,v$ of the catalytic tape, so it
uses at most $\log n+1$ catalytic cells and never needs to know where the
catalytic tape ends.

\end{proof}

\section{A Lower Bound for \texorpdfstring{$\mathrm{POW2}$}{POW2}}
\label{sec:lower-bound}

We now show the space lower bound for $\mathrm{POW2}$. We will use the following
theorem of Hartmanis and Berman.

\begin{theorem}[{Hartmanis and Berman~\cite{HartmanisBerman76}}]
\label{thm:unary-sublogarithmic-space}
Every infinite unary language that can be decided by a deterministic Turing machine
using $o(\log n)$ work space contains an infinite regular sublanguage.
\end{theorem}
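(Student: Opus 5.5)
The plan is to prove the statement by a pumping argument on the input head. Fix a deterministic machine $M$ deciding the infinite unary language $A$ in space $s(n)=o(\log n)$. For an input length $n$, let $C(n)$ be the number of \emph{internal configurations} of $M$: the state, the work-tape contents (of length at most $s(n)$) and the work-head position, without the input-head position. Then $C(n)\le |Q|\cdot s(n)\cdot |\Gamma|^{s(n)} = 2^{O(s(n))} = n^{o(1)}$. Hence for all sufficiently large $n$ we have $C(n)^2 < n$, say. Since $A$ is infinite, we may fix some $1^n\in A$ with $n$ this large, and write $C=C(n)$ and $P=C!$. The goal is to show that $1^{n+kP}\in A$ for every $k\ge 0$. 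The set $\{1^{n+kP}:k\ge 0\}$ is then an infinite regular (eventually periodic) sublanguage of $A$.

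The key step is a structural lemma about the computation on $1^n$ while the input head is in the interior of the block of $1$s. There it reads only $1$'s, so $M$ behaves like a finite automaton with $C$ states and a two-way head on an unbounded stretch. Consider a maximal \emph{sojourn}, meaning a segment of the run during which the input head stays strictly between the two endmarkers, entered from one end. Within at most $C$ steps, some internal configuration repeats, and the computation from then on is periodic with some period $t\le C$ and head displacement $d$ with $|d|\le C$.

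\begin{itemize}
\item If $d=0$, the machine loops forever, which contradicts that $M$ halts. So this case cannot occur once the repetition happens in the interior.
\item If $d$ points back toward the entry end, the head returns to that end. In that case the head never gets farther than about $C^2$ cells from the entry end.
\item If $d$ points toward the far end, the head drifts across in a strictly periodic manner. It leaves the periodic regime only once it is within about $C^2$ cells of the opposite endmarker.
\end{itemize}

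Because $n>C^2$, the regions near the two ends, where the behaviour is not purely periodic, are disjoint.

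Now compare the run on $1^n$ with the run on $1^{n+kP}$. Every displacement $|d|\le C$ divides $P$. So each crossing sojourn on the longer input consists of exactly the same steps, with $kP/|d|$ extra repetitions of the cycle inserted. These extra repetitions pass through the same internal configurations, so the sojourn ends at the far end in the same internal configuration and at the same relative offset. Non-crossing sojourns stay within $C^2<n$ cells of their end, so they are literally identical on both inputs. Near-endmarker behaviour is also unchanged. By induction over the sequence of sojourns, the runs visit the same sequence of internal configurations at corresponding moments, up to the inserted cycles. Therefore $M$ halts with the same answer, accepting, and uses the same work space $s(n)$; in particular the new run never needs more than $s(n)$ work cells.

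I expect the main obstacle to be making the sojourn lemma rigorous. One must quantify precisely how far the head can get before its internal configuration repeats, and handle the transient segments near the ends, so that "periodic in the middle" is justified. A second point needing care is the bookkeeping of the offset modulo $d$ when the sojourn exits, to check that inserting $kP$ cells shifts nothing. I would first prove the lemma for a single one-directional traversal, using the pigeonhole principle on the first $C+1$ configurations seen at the successive new rightmost (or leftmost) positions. I would then reduce general sojourns to this case by noting that a sojourn which never reaches distance $C^2$ from its entry point is unaffected by lengthening the input.
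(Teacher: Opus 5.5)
The paper gives no proof of this statement; it imports it from Hartmanis and Berman as a black box, so there is nothing in the paper to compare against. Your proposal is a correct, self-contained proof along the standard lines. You fix an accepted $1^n$ with $n$ large compared with the number $C$ of internal configurations reachable on $1^n$. You then show that inserting $kC!$ cells only lengthens the crossing sojourns by whole periods. So the run on $1^{n+kC!}$ passes through the same internal configurations, uses the same space, and accepts.

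The points you flagged do go through.

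\emph{Short and returning sojourns.} A sojourn that ends before any internal configuration repeats (at an endmarker, or by halting inside the block) lasts at most $C$ steps, so it falls under your non-crossing case. Since the preperiod $\mu$ and period $t$ satisfy $\mu+t\le C$, these sojourns and the returning ones stay within $C$ cells of the entry end. Hence $C<n$ already suffices in place of $C^2<n$.

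\emph{Crossing sojourns.} Let $T\ge n>\mu$ be the first hitting time of the far endmarker on $1^n$. Periodicity puts the head on the far endmarker of $1^{n+kP}$ in configuration $c_T$ at time $T'=T+(kP/d)t$. Nothing happens earlier. An index $i<T'$ with $p_i=n+kP+1$ would satisfy one of two things. Either $i-(kP/d)t\ge\mu$, which gives an earlier hit of $n+1$ on $1^n$. Or $p_i\le C+kP<n+kP+1$, a contradiction. In the periodic part $p_i\ge p_{\mu+r}\ge 1$, so the entry endmarker is not hit either.

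\emph{A cleaner variant.} The first-visit formulation in your last paragraph is slightly cleaner. Suppose the head ever reaches the $(C+1)$-st cell from its entry end. Then two of the first-visit configurations $q_1,\dots,q_{C+1}$ coincide, say $q_i=q_j$ with $i<j$. Translation invariance gives $q_{l+(j-i)}=q_l$ for all $l\ge i$ and forbids any return to the entry end. The dichotomy is then simply ``stays within $C$ cells of the entry'' versus ``crosses and exits in $q_{n+1}=q_{n+kP+1}$''. There is no $d=0$ case to exclude.
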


We now show that $\mathrm{POW2}$ contains no infinite regular sublanguage.

\begin{lemma}
\label{lem:pow2-no-infinite-regular-subset}
The language $\mathrm{POW2}$ contains no infinite regular language.
\end{lemma}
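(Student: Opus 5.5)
We argue by contradiction, using the characterization of unary regular languages recalled in \cref{sec:preliminaries}. Suppose $R\subseteq\mathrm{POW2}$ is an infinite regular language. Then there exist integers $N\geq 0$ and $p\geq 1$ such that for every $m\geq N$ we have $1^m\in R$ if and only if $1^{m+p}\in R$. Since $R$ is infinite, we can choose some $1^m\in R$ with $m\geq N$; because $R\subseteq\mathrm{POW2}$, this $m$ equals $2^k$ for some $k\geq 0$. We may also pick $m$ as large as we like, and we will take $m=2^k>p$.

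Next we apply eventual periodicity repeatedly. By induction on $j$, all lengths $m+jp$ with $j\geq 0$ lie in $R$, and hence in $\mathrm{POW2}$. In particular, $m+p=2^k+p$ must be a power of two. But since $0<p<2^k$, we have
\[
    2^k < 2^k+p < 2^{k+1},
\]
so $2^k+p$ lies strictly between two consecutive powers of two and cannot itself be a power of two. This is the desired contradiction.

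If one prefers not to choose $m$ large, an alternative route gives the same contradiction. Among the infinitely many lengths $m+jp$, all of them powers of two, consecutive terms differ by exactly $p$, whereas the gaps between consecutive powers of two grow without bound. Taking just $m$ and $m+p$ with $m>p$ already suffices, so the argument above is enough.

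There is no serious obstacle here. The only care needed is to select an element $2^k\in R$ with $2^k\geq N$ and $2^k>p$, which is possible exactly because $R$ is infinite. The lemma then combines with \cref{thm:unary-sublogarithmic-space}: since $\mathrm{POW2}$ is infinite and contains no infinite regular sublanguage, it cannot be decided in $o(\log n)$ space.
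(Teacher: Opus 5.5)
Your proof is correct and is essentially the paper's argument: choose $1^m\in R$ with $m=2^k\geq N$ and $m>p$, use periodicity to get $m+p\in\mathrm{POW2}$, and reach a contradiction from $2^k<2^k+p<2^{k+1}$. The induction over all $j$ and the alternative-route paragraph are unnecessary, since only the single step from $m$ to $m+p$ is used.
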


\begin{proof}
Suppose that there is an infinite regular language $R\subseteq\mathrm{POW2}$. Since
$R$ is unary and regular, membership in $R$ is eventually periodic. Thus, there are
integers $N\geq 0$ and $p\geq 1$ such that, for every $m\geq N$,
\[
    1^m\in R
    \quad\Longleftrightarrow\quad
    1^{m+p}\in R.
\]
Since $R$ is infinite, we can choose $m\geq\max\{N,p+1\}$ for which $1^m\in R$.
The containment $R\subseteq\mathrm{POW2}$ implies that $m=2^k$ for some $k$.
Periodicity implies that $1^{m+p}\in R$, so $m+p$ must also be a power of two. On
the other hand, $p<m$, and therefore
\[
    m<m+p<2m.
\]
There is no power of two strictly between the consecutive powers of two $m$ and
$2m$, which gives a contradiction.
\end{proof}

From \cref{thm:unary-sublogarithmic-space} and \cref{lem:pow2-no-infinite-regular-subset}, we immediately obtain the following lower bound.

\begin{theorem}
\label{thm:pow2-not-in-ll}
No deterministic Turing machine with an initially blank and unmarked work tape can
decide $\mathrm{POW2}$ using $o(\log n)$ space. Consequently,
\[
    \mathrm{POW2}\notin\mathsf{LL}.
\]
\end{theorem}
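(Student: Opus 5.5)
The plan is a proof by contradiction that combines \cref{thm:unary-sublogarithmic-space} with \cref{lem:pow2-no-infinite-regular-subset}. Suppose some deterministic Turing machine $M$, in the off-line model of \cref{sec:preliminaries} with a blank and unmarked work tape, decides $\mathrm{POW2}$ using $s(n)=o(\log n)$ work space. The first step is to note that $\mathrm{POW2}$ is an infinite unary language, since $1^{2^k}\in\mathrm{POW2}$ for every $k\ge 0$.

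Next, invoke \cref{thm:unary-sublogarithmic-space} on the language decided by $M$. This yields an infinite regular language $R\subseteq\mathrm{POW2}$. But \cref{lem:pow2-no-infinite-regular-subset} says that no such $R$ exists, so no such $M$ exists.

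For the ``consequently'' part, recall that $\mathsf{LL}=\mathsf{DSPACE}(\log\log n)$. Any machine using $O(\log\log n)$ space also uses $o(\log n)$ space, because $\log\log n/\log n\to 0$. So membership of $\mathrm{POW2}$ in $\mathsf{LL}$ would contradict the first part, and hence $\mathrm{POW2}\notin\mathsf{LL}$.

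The main obstacle is making sure that the hypotheses of \cref{thm:unary-sublogarithmic-space} match the model used here. Hartmanis and Berman state their result for machines with endmarked two-way input and a bounded work tape; this is exactly why the statement stresses an initially blank and unmarked work tape. The space measure also needs care. Here it is the number of visited cells, and it could be taken as either a worst-case bound or a bound on all computations. These two readings are equivalent for halting deterministic machines, since a halting run's space is determined by the input. I would check that the $o(\log n)$ condition in the cited theorem is meant in the worst-case, every-input sense used in our definition of $\mathsf{DSPACE}$, so that the theorem applies verbatim. Beyond this, no new combinatorics is needed.
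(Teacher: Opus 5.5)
Your proposal is correct and follows the paper's proof: apply \cref{thm:unary-sublogarithmic-space} to the infinite unary language $\mathrm{POW2}$ to obtain an infinite regular sublanguage, contradict \cref{lem:pow2-no-infinite-regular-subset}, and then use $\log\log n=o(\log n)$ to conclude $\mathrm{POW2}\notin\mathsf{LL}$. Your worry about matching the space measure is harmless, because the paper's bound holds on every input, which is the strongest form of the hypothesis, so whichever standard version of the cited theorem is meant applies.
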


\section{Conclusion}
\label{sec:conclusion}

We have given an unconditional separation between a deterministic space class
and its catalytic counterpart. The separation is witnessed by the unary
language $\mathrm{POW2}$. We constructed a catalytic Turing
machine that decides $\mathrm{POW2}$ using $O(\log\log n)$ ordinary work space
and $O(\log n)$ catalytic space. On the other hand, the theorem of Hartmanis
and Berman, together with the fact that $\mathrm{POW2}$ has no infinite regular
sublanguage, shows that no standard deterministic Turing machine decides
$\mathrm{POW2}$ using $o(\log n)$ work space. Consequently,
\[
    \mathsf{LL}\subsetneq\mathsf{CLL}.
\]
Thus, catalytic space provides a genuine increase in computational power
when the ordinary work tape has only doubly logarithmic size.

There is also a natural question below the doubly logarithmic threshold. In the
ordinary space model, every language recognized using
$o(\log\log n)$ work space is regular~\cite{StearnsHartmanisLewis65}. It is
therefore interesting to ask whether catalytic space can make useful
computation possible below this classical threshold: does there exist a
nonregular language that can be decided by a catalytic Turing machine using
$o(\log\log n)$ ordinary work space and $O(\log n)$ catalytic space? 

Another direction is to understand the role of nondeterminism at the doubly
logarithmic scale. The containment $\mathsf{NL}\subseteq\mathsf{CL}$ shows that
at logarithmic space, catalytic memory can simulate nondeterministic log-space.
The corresponding question for the classes considered in this paper is open.
Let $\mathsf{NLL}$ denote the class of languages decided by nondeterministic Turing machines using $O(\log\log n)$ work space.
It is natural to ask whether
\[
    \mathsf{NLL}\stackrel{?}{\subseteq}\mathsf{CLL}.
\]

The most prominent open question remains the analogous separation at the
logarithmic scale. In particular, proving
\[
    \mathsf{L}\subsetneq\mathsf{CL}
\]
unconditionally remains a major open problem. Even conditional separations or collapses based on weaker hardness assumptions would be of interest.

\section*{Acknowledgements}

We thank Andr\'as Salamon for pointing out an issue with the endmarkers of the catalytic tape in an
earlier version of this paper.

\bibliographystyle{customurlbst/alphaurlpp}
\bibliography{references}

\end{document}